\let\csname equation*\endcsname\relax
\let\csname endequation*\endcsname\relax
\newcommand{\secref}[1]{(\ref{#1})}
\newcommand{\F}{\mathscr{F}}
\newcommand{\W}{\mathscr{W}}
\newcommand{\M}{\mathcal{M}}
\newcommand{\N}{\mathcal{N}}
\newcommand{\comm}[2]{\left[#1,#2\right]}
\newtheorem{mydef}{Definition}[subsection]
\newtheorem{Theorem}{Theorem}[subsection]
\newtheorem{ATheorem}{Theorem}[section]
\begin{document}

\title{``Vacuum-like'' Hadamard states for quantum fields on curved spacetimes}
\author{Marcos Brum$^{1,2}$, Klaus Fredenhagen$^{2}$}
\address{$^{1}$ Instituto de F\'{\i}sica, Universidade Federal do Rio de Janeiro, Caixa Postal 68528, Rio de Janeiro, RJ 21941-972, Brazil}
\ead{mbrum@if.ufrj.br}
\address{$^{2}$ II. Institut f\"ur theoretische Physik, Universit\"at Hamburg -- Luruper Chaussee 149, D-22761 Hamburg (HH), Germany}
\ead{marcos.brum@desy.de, klaus.fredenhagen@desy.de}

\begin{abstract}
We present a modification of the recently proposed Sorkin-Johnston states for scalar free quantum fields on a class of globally hyperbolic spacetimes possessing compact Cauchy hypersurfaces. The modification relies on a smooth cutoff of the commutator function and leads
always to Hadamard states, in contrast to the original Sorkin-Johnston states. The modified Sorkin-Johnston states are, however, due to the smoothing no longer uniquely associated to the spacetime. \\[3ex]
\end{abstract}

\pacs{04.62.+v,03.65.Fd}
\submitto{CQG}

\maketitle

\section{Introduction}
One of the crucial insights of quantum field theory on curved spacetimes is the absence of a distinguished state corresponding to the vacuum state on Minkowski space. This is intimately related with the nonexistence of a unique particle interpretation of the theory and manifests itself most dramatically in the Hawking effect. The absence of a vacuum state has nowadays the status of a no go theorem \cite{FewsterVerch12,FewsterVerch_sf12} which is valid under very general conditions.

Recently a new proposal for a distinguished quantum state for a free scalar field has been put forward by Sorkin and Johnston (see \cite{AfshordiAslanbeigiSorkin12} and references therein). Their idea is based on the fact that the commutator function may be considered as the integral kernel of an antisymmetric operator on some real Hilbert space, as discussed long ago e.g. by Manuceau and Verbeure \cite{ManuceauVerbeure68}. Under some technical conditions, the polar decomposition of this operator yields an operator having the properties of the imaginary unit, and a positive operator in terms of which a new real scalar product can be defined. The new scalar product then induces a pure quasifree state. This method of constructing a state can e.g. be applied for a free scalar quantum field on a static spacetime where the energy functional provides a quadratic form on the space of Cauchy  data in terms of which a Hilbert space can be defined. The result is the ground state with respect to  time translation symmetry (see, e.g. \cite{Kay78}).

On a spacetime without a timelike Killing vector it is not clear how to introduce a Hilbert space structure which is determined by the given data, the geometry and the parameters in the Klein-Gordon equation. The proposal of Sorkin and Johnston now is to use the volume measure on the spacetime and the corresponding real Hilbert space of square integrable real-valued functions \footnote{The analogous idea for the Dirac field has been proposed and analyzed some time ago by Finster \cite{Finster11}, there called the fermionic projector.}. The question which arises is whether the commutator function, considered as an antisymmetric densely defined bilinear form, admits a polar decomposition as needed for the construction of a state. Provided such a state exists one would like to see whether it satisfies the Hadamard condition which guarantees that the state can be extended to composite local fields as e.g. the energy momentum tensor. 

These questions have been investigated by Fewster and Verch \cite{FewsterVerch-SJ12}. They prove that the commutator function induces a bounded operator 
if the spacetime admits an isometric embedding as a relatively compact subset of another spacetime,
thus in this case the construction is possible. But the state, in general,  does not satisfy the Hadamard condition; moreover, its restriction to a smaller subregion may induce a GNS representation, which is inequivalent to the GNS representation induced by the S-J state of the smaller region.

While such an obstruction had to be expected in view of the mentioned no go theorem, it would be a pity if this new ansatz for the construction of states had to be abolished. As a matter of fact, our understanding of the state space of quantum field theories is still rather poor. We know, by the deformation argument of  Fulling, Narcowich and Wald \cite{FullingNarcowichWald81}, that Hadamard states on globally hyperbolic spacetimes always exist, but this argument is rather indirect and does not admit a detailed physical interpretation. On Friedmann-Robertson-Walker (FRW) spacetimes, a concrete prescription is that of adiabatic vacuum states, as introduced by Parker \cite{Parker69}. It was later made mathematically precise by L\"uders and Roberts \cite{LuRo90} and further analyzed by Junker and Schrohe \cite{JunSchrohe02}. Unfortunately, it turned out that in the precise version the prescription is no longer unique, but determines instead a class of states. Junker also gave a general construction of Hadamard states in terms of pseudo-differential operators. This method was recently generalized by G\'{e}rard and Wrochna \cite{GerardWrochna12}. Another construction applies to spacetimes with an asymptotically flat past. Here states can be interpreted by their properties on a past horizon. This is interesting for the description of states for the early universe. (See, e.g. \cite{dappiaggi:062304}.)  

Nearer to the original idea of Parker is the concept of states of low energy (SLE-states), as proposed by Olbermann \cite{Olbermann07}. Here the idea is to minimize the energy density (averaged over time) in spatially homogeneous states on FRW spacetimes. This idea is motivated by the result of Fewster that suitable averages of the energy density over a timelike curve are bounded from below (Quantum Energy Inequalities \cite{Fewster00,FewsterSmith08}). The SLE depend only on the sampling function and satisfy the Hadamard condition. Their construction was recently extended to a larger class of spacetimes \cite{ThemBrum13}. As shown by Degner \cite{Degner09}, concrete calculations based on these states are possible.

In this work we present a modification of the S-J states. 
As in \cite{FewsterVerch-SJ12} we consider an embedding of a spacetime as a relatively compact subset of another spacetime. But instead of applying the S-J-construction to the commutator function of the smaller spacetime we apply it to the commutator function of the larger spacetime, multiplied in both variables by a smooth function with compact support which is identical to 1 on the embedded spacetime. We test this approach on static and cosmological spacetimes and find that the construction yields Hadamard states. The original construction is obtained if one substitutes the smooth function by the characteristic function of the smaller spacetime.

On section \secref{fieldquant} we review the scalar field quantization according to the algebraic approach. On section \secref{S-J_states} we construct the modified S-J states, presenting the requirements imposed on the spacetime for the construction to be well-defined and showing that the smoothing is sufficient for these states to be Hadamard. 


\section{Scalar Field quantization on Globally Hyperbolic Spacetimes}\label{fieldquant}

\subsection{Quantized scalar field}

Globally hyperbolic spacetimes $\M$ are smooth, orientable, time orientable and paracompact manifolds that admit a foliation into smooth, nonintersecting spatial hypersurfaces $\Sigma$ of codimension 1 \cite{BernalSanchez03,Wald84}. They have the topological structure $\M = \mathbb{R}\times\Sigma$. For any subset $S\subset\Sigma$ one can define its {\it Domain of Dependence} $D(S)$ as the set of points $p\in \M$ such that every inextendible causal curve through $p$ intersects $S$. Clearly, $D(\Sigma)=\M$. The determination of the solution of the equations of motion on a neigborhood of $S$ fixes uniquely the field configuration at any point of spacetime contained in $D(S)$ \cite{Wald94}.

It is well known \cite{BarGinouxPfaffle07} that the Klein-Gordon equation on such a spacetime admits unique retarded and advanced fundamental solutions, which are maps $\mathds{E}^{\pm}:\mathcal{C}_{0}^{\infty}(\M)\rightarrow \mathcal{C}^{\infty}(\M)$, such that, for $f\in \mathcal{C}_{0}^{\infty}(\M)$,
\begin{equation}
 \left(\Box +m^{2}\right)\mathds{E}^{\pm}f=\mathds{E}^{\pm}\left(\Box +m^{2}\right)f=f
 \label{KGfund}
\end{equation}
and
\[\textrm{supp}(\mathds{E}^{\pm}f)\subset J^{\pm}(\textrm{supp}f) \; ,\]
where $J^{+(-)}(S)$, $S\in\M$, is the causal future (past) of $S$, the set of points on $\M$ which can be reached from $S$ along a future-(past-)directed causal curve having starting point in $S$. The functions $f\in \mathcal{C}_{0}^{\infty}(\M)$ are called test functions, and $P\coloneqq \Box +m^{2}$ will denote the differential operator. From the fundamental solutions, one defines the {\it advanced-minus-retarded-operator} $\mathds{E}\coloneqq \mathds{E}^{-}-\mathds{E}^{+}$ as a map $\mathds{E}:\mathcal{C}_{0}^{\infty}(\M)\rightarrow \mathcal{C}^{\infty}(\M)$.

The Lorentzian metric $g$ generates a measure on the spacetime, and we define the inner product on the space of test functions $\mathcal{C}_{0}^{\infty}(\M)$ by
\begin{equation}
 (f,f')_{g}\coloneqq \int\textrm{d}^{4}x\sqrt{|g|}\, \overline{f}(x)f'(x) \; .
  \label{innerprod}
\end{equation}
Using $\mathds{E}$, we define the anti-symmetric form
\begin{equation}
 \sigma(f,f')=-\int\textrm{d}^{4}x\sqrt{|g|}\, f(x)(\mathds{E}f')(x) = -(\overline{f},\mathds{E}f')_{g} \eqqcolon -E(f,f') \; .
 \label{symplform}
\end{equation}

The free quantum field $\Phi$ is a linear map from the space of test functions $\mathcal{C}_{0}^{\infty}(\M)$ to a unital *-algebra satisfying
\begin{equation}
 \Phi\left(Pf\right)=0
 \label{KGEoM}
\end{equation}
hence $\Phi$, formally written as 
\[\Phi(f)=\int\textrm{d}^{4}x\sqrt{|g|}\, \phi(x)f(x) \; ,\]
may be understood as an algebra valued distributional solution of the Klein Gordon equation.  
Moreover, $\Phi$ satisfies the relations
\begin{enumerate}[label=(\roman{*})]
 \item $\Phi(\overline{f})=\Phi(f)^{*}$;
 \item $\comm{\Phi(f)}{\Phi(f')}=-i\sigma(f,f')\mathds{1}$, where $\comm{\cdot}{\cdot}$ is the commutator and $\mathds{1}$ is the unit element.
\end{enumerate}
The $CCR$-algebra $\F$ is the (up to isomorphy) uniquely determined unital $^{*}$-algebra generated by the symbols $\Phi(f)$. 


The symbols $\Phi(f)$ are unbounded. In order to obtain an algebra of bounded operators, we will construct the so-called {\it Weyl algebra} as follows. Operating on the quocient space $\mathcal{C}_{0}^{\infty}(\M)/\mathrm{Ke}\mathds{E}_\M \times \mathcal{C}_{0}^{\infty}(\M)/\mathrm{Ke}\mathds{E}_\M$, the anti-symmetric form $\sigma$ becomes nondegenerate. We thus define the real vector space $L\coloneqq \mathrm{Re}\left(\mathcal{C}_{0}^{\infty}(\M)/\mathrm{Ke}\mathds{E}_\M\right)$ and hence $(L,\sigma)$ is a real symplectic space where $\sigma$ is the symplectic form. A $C^{*}$-algebra can be formed from the elements of this real symplectic space by introducing the symbols $W(f)$, $f\in L$ which satisfy (for more details, see\cite{BraRob-II}):
\begin{enumerate}[label=(\Roman{*})]
 \item $W(0)=\mathds{1}$;
 \item $W(-f)=W(f)^{*}$;
 \item For $f,g\in L$, $W(f)W(g)=e^{-i\frac{\sigma(f,g)}{2}}W(f+g)$.
\end{enumerate}
These symbols generate the {\it Weyl algebra} $\W\left(L,\sigma\right)$. From the nondegenerateness of the symplectic form one sees that $W(f)=W(g)$ iff $f=g$.

In the following we want to restrict ourselves to two classes of spacetimes for which we have good control on the commutator function $\mathds{E}$: the first class consists of static spacetimes, i.e. spacetimes with a timelike Killing vector $k$ and with Cauchy surfaces which are orthogonal to the Killing vector; these spacetimes admit a coordinate system in which the metric assumes the form 
\begin{equation}
ds^2=\alpha^{2}dt^2-h_{ij}dx^idx^j
\label{metric-timedecomp-static}
\end{equation}  
where all coefficients are smooth functions on a Cauchy surface $\Sigma$.
The  second class, called expanding spacetimes, has a metric of the form
\begin{equation}
 ds^{2}=dt^{2}-c^{2}h_{ij}dx^{i}dx^{j}\; ,
 \label{metric-timedecomp-GH}
\end{equation}
where $c$ is a smooth positive function of time, the so-called {\it scale factor}, and $h_{ij}$ is a  smooth time independent Riemannian metric on $\Sigma$. This class includes in particular cosmological spacetimes of the Friedmann-Robertson-Walker type. The ultrastatic spacetimes investigated in \cite{FewsterVerch-SJ12} belong to both classes. For simplicity, we will only consider spacetimes with compact Cauchy hypersurfaces.

Actually, there always exists a coordinate system in which the metric on a globally hyperbolic spacetime assumes the form \cite{Wald84}, 
\begin{equation*}
 ds^{2}=\gamma(t,\uline{x}) dt^{2}-h_{ij}(t,\uline{x})dx^{i}dx^{j}\; .
\end{equation*}
We expect that, with some more effort, our constructions can be generalized to the  generic case.

In static spacetimes, the Klein-Gordon equation \eqref{KGfund} becomes
\begin{equation}
 \frac{\partial^{2}\phi}{\partial t^{2}}+K\phi=0 \; ,
  \label{KG-static}
\end{equation}
where
\[K=\alpha^{2}\left[\frac{1}{\sqrt{|g|}}\partial_{j}(\sqrt{|g|}h^{jk}\partial_{k})+m^{2}\right] \; .\]
On the Hilbert space $L^2(\Sigma,\alpha^{-2}\sqrt{|g|})$, the operator $K$ is symmetric and positive. According to Kay \cite{Kay78}, if the spacetime is uniformly static, i.e., $\alpha$ is bounded fom above and from below away from zero, the operator $K$ is even essentially selfadjoint on the domain $\mathcal{C}^{\infty}_0(\Sigma)$. We give in Theorem \secref{theorem_essential_self-adjointness} a proof of essential self-adjointness of $K$ without any assumptions on $\alpha$.
Due to the compactness of $\Sigma$, its selfadjoint closure, again denoted by $K$, has a discrete spectrum with an orthonormal system of smooth eigenfunctions $\psi_{j}$ and positive eigenvalues $\lambda_j$, $j\in\mathbb{N}$ with $\lambda_j\ge\lambda_k$ for $j>k$. Moreover, due to Weyl's asymptotic, the sums $\sum_j\lambda_j^{-p}$ converge for $p>\frac{d}{2}$ where $d$ is the dimension of $\Sigma$ \cite{Jost11}.

The advanced-minus-retarded-operator, in this case, has the integral kernel
\begin{equation}
\mathds{E}(t,\uline{x};t',\uline{x'})=-\sum_{j}\frac{1}{\omega_{j}}\sin((t-t')\omega_{j})\psi_{j}(\uline{x})\overline{\psi}_{j}(\uline{x'})\; ,
\label{E-static}
\end{equation}
with $\omega_{j}=\sqrt{\lambda_j}$ \cite{Fulling89}. This sum converges in the sense of distributions, as the one in equation \eqref{E-prop}.

In the case of expanding spacetimes, the Klein-Gordon equation assumes the form (see, e.g., \cite{LuRo90,ThemBrum13})
\begin{equation}
 \left(\partial_{t}^{2}+3\frac{\dot{c}(t)}{c(t)}\partial_{t}-\frac{\Delta_{h}}{c(t)^{2}}+m^{2}\right)\phi(t,\uline{x})=0 \; .
 \label{KG}
\end{equation}

On the compact Riemannian space $(\Sigma,h)$ the Laplace operator $-\Delta_{h}$ is essentially self-adjoint. Its unique self-adjoint extension (denoted by the same symbol) is an operator on $L^{2}(\Sigma,\sqrt{\lvert h\rvert})$ with discrete spectrum \cite{Jost11}.
Again we use the orthonormal basis of eigenfunctions $\psi_{j}$ and the associated nondecreasing sequence of eigenvalues $\lambda_j$ of $-\Delta_h$.
An ansatz for a solution is 
\begin{equation}
\Phi(t,\uline{x})=T_j(t)\psi_{j}(\uline{x}) \ .
\end{equation}
$T_j$ then has to satisfy the ordinary second order linear differentlal equation
\begin{equation}
\frac{d^2}{dt^2}T_j+3\frac{\dot{c}}{c}\frac{d}{dt}T_j+\omega_j^2T_j=0
\end{equation}
with
\begin{equation}
 \omega_{j}(t)\coloneqq\sqrt{\frac{\lambda_{j}}{c(t)^{2}}+m^{2}} \; .
 \label{KGfreq}
\end{equation}
The 2 linearly independent real-valued solutions of this equation can be combined into  one complex valued solution satisfying the normalization condition
\begin{equation}
T_j(t)\dot{\overline{T}}_j(t)-\dot{T}_j(t)\overline{T}_j(t)=\frac{i}{c(t)^3}\ .
\end{equation} 

The advanced-minus-retarded operator now has the integral kernel
\begin{equation}
 \mathds{E}(t,\uline{x};t',\uline{x'})=\sum_{j}\frac{(\overline{T}_{j}(t)T_{j}(t')-T_{j}(t)\overline{T}_{j}(t'))}{2i}\psi_{j}(\uline{x})\overline{\psi}_{j}(\uline{x'}) \; .
 \label{E-prop}
\end{equation}

\subsection{States and the Hadamard condition}\label{secstates}$ $

States $\omega$ are functionals over the algebra $\F(\M)$ (or $W(L)$), with the following properties: 
\begin{description}
 \item [Linearity] $\omega(\alpha A+\beta B)=\alpha\omega(A)+\beta\omega(B)$, $\alpha$, $\beta\in \mathbb{C}$, $A$, $B\in \F(\M)$ (or $A$, $B\in W(L)$);
 \item [Positive-semidefiniteness] $\omega(A^{*}A)\geq 0$;
 \item [Normalization] $\omega(\mathds{1})=1$.
\end{description}
The $n-${\it point functions} of $\omega$ are defined as
\[w_{\omega}^{(n)}(f_{1}\otimes \ldots \otimes f_{n}) \coloneqq \omega(\Phi(f_{1})\ldots\Phi(f_{n}))\]
(or the corresponding relation for the Weyl algebra). In the present work we will focus on states which are completely described by their two-point function, the so called {\it Quasifree States}. The two-point function can be decomposed in its symmetric and anti-symmetric parts ($f_{1},\, f_{2}\in L$ below)
\[w_{\omega}^{(2)}(f_{1},f_{2})=\mu(f_{1},f_{2})+\frac{i}{2}\sigma(f_{1},f_{2}) \; ,\]
where $\mu(\cdot,\cdot)$ is a symmetric product which majorizes the symplectic product, i.e.
\[|\sigma(f_{1},f_{2})|^{2}\leq 4\mu(f_{1},f_{1})\mu(f_{2},f_{2}) \; .\]
The state is said to be {\it pure} if the inequality above is saturated, i.e., $\forall f_{1}\in L$ and $\forall
\epsilon>0$, $\exists \, f_{2}\in L$ such that
\[\frac{|\sigma(f_{1},f_{2})|^{2}}{\mu(f_{2},f_{2})}\ge (4-\epsilon)\mu(f_{1},f_{1}) \; .\]


In order to extend the states to correlation functions of nonlinear functions of the field as, e.g., the energy momentum tensor, one needs some control on the singularities of the $n$-point functions. On Minkowski space, the spectrum condition implies such a structure, and the standard way to incorporate nonlinear functions of the field is via normal ordering. On a generic spacetime one replaces the spectrum condition by a condition on the wavefront set. As shown by Radzikowski \cite{Radzikowski96}, the two-point functions of Hadamard states can elegantly be characterized by their wave front sets. This observation is at the basis of the modern approach to quantum field theory on curved spacetimes   \cite{BruFreKoe96,BruFre00,HoWa02}.

The wavefront set of a distribution $v$ is a subset of the cotangent bundle 
which characterizes its singularity. Roughly speaking, it consists of elements $(x,k)\in \mathcal{T}^*\M$, $k\not=0$ such that the local Fourier transform of $v$ does not decay
rapidly in any open cone $V$ around $k$.
Rapid decay means that $\forall N \in \mathbb{N}_{0} \; ,\, \exists C_{N}>0$ such that
\begin{equation}
 \lvert \hat{v}(k) \rvert \leqslant C_{N}\left(1+\lvert k \rvert\right)^{-N} \; ,\, k \in V \; ,
 \label{regsupp}
\end{equation}
where $\hat{v}$ is a local Fourier transform of $v$ at $x$, i.e. the Fourier transform (in any chart of $\M$) of $\phi v$ for a test function $\phi$ with compact support which does not vanish at $x$.

Finally, Hadamard states are defined by the following
\begin{mydef}
 A state $\omega$ is said to be a {\it Hadamard state} if its two-point distribution $\omega_{2}$ has the following wavefront set:
\begin{equation}
 WF(\omega_{2})=\left\{\left(x_{1},k_{1};x_{2},-k_{2}\right) | \left(x_{1},k_{1};x_{2},k_{2}\right)\in {\mathcal T}^{*}\left(\M\times\M\right) \diagdown \{0\} ; (x_{1},k_{1})\sim (x_{2},k_{2}) ; k_{1}\in \overline{V}_{+}\right\}
 \label{Wfcond}
\end{equation}
where $(x_{1},k_{1})\sim (x_{2},k_{2})$ means that there exists a null geodesic connecting $x_{1}$ and $x_{2}$, $k_{1}$ is the cotangent vector to this geodesic at $x_{1}$ and $k_{2}$, its parallel transport, along this geodesic, at $x_{2}$. $\overline{V}_{+}$ is the closed forward light cone of $\mathcal{T}^{*}_{x_{1}}\M$.
\end{mydef}

Since the antisymmetric part of a Hadamard 2 point function is the commutator function $\mathds{E}$, 
the difference between the two-point functions of different Hadamard states is symmetric. But the symmetric part of the wave front set of a Hadamard function is empty, hence it is a smooth function. 
This fact will play a fundamental role when we come to the proof that the states which will be constructed below are Hadamard states.

Later we will also need a refinement of the concept of the (smooth) wavefront set, namely the Sobolev wavefront set of order $s$ with $s\in\mathbb{R}$. It is obtained from the definition above by the replacement of the condition of rapid decay within a cone $V$ by the condition
\begin{equation}
\int_V d^n k\,(1+|k|^2)^s|\hat{v}(k)|^2<\infty \ .
\end{equation}
For the complete definition of wavefront sets, see \cite{Hormander-I}. 

\section{``Vacuum-like'' Hadamard states}\label{S-J_states}

As stated in the introduction, the original definition of the Sorkin-Johnston states aimed at constructing distinguished states on any globally hyperbolic spacetime \cite{AfshordiAslanbeigiSorkin12}. This was supposed to fill the gap left open by the absence of a vacuum state on nonstationary spacetimes, as well as serving as initial state for application in cosmological problems. Actually, on Minkowski space one could show that it indeed coincides with the vacuum (modulo some technical problems with unbounded bilinear forms).

Unfortunately, it turned out that in typical cases which are under control the resulting states are not Hadamard states \cite{FewsterVerch-SJ12}. We are going to present now a modification of this construction, that we call modified S-J states. After presenting the general construction, we show that we obtain Hadamard states on both static and expanding spacetimes.

The construction of the S-J states starts from the observation in \cite{FewsterVerch-SJ12} that the advanced-minus-retarded-operator, operating on square-integrable functions on a globally hyperbolic spacetime $\M$, embedded, with relatively compact image, into another globally hyperbolic spacetime $\N$, is a bounded operator.

We consider a globally hyperbolic spacetime $\N=\mathbb{R}\times\Sigma$ with compact Cauchy surfaces $\{t\}\times\Sigma$ and a subspacetime $\M=\mathds{I}\times\Sigma$, where $\mathds{I}=(a,b)$ is a bounded interval. We have the isometric embedding $\Psi :\M \rightarrow \N,\ (t,\uline{x})\mapsto (t,\uline{x})$.

By the uniqueness of the advanced and retarded fundamental solutions the  advanced-minus-retarded-operator on $\M$ is obtained from the corresponding operator on $\N$,
\begin{equation}
\mathds{E}_{\M}=\Psi^{*}\mathds{E}_{\N}\Psi_{*}\; ,
\end{equation}
where $\Psi^{*}$, $\Psi_{*}$ are, respectively, the pull-back and push-forward associated to $\Psi$. $\Psi_{*}$ is an isometry from $L^2(\M)$ to $L^2(\N)$, and $\Psi^{*}$ its adjoint.

Let $f\in \mathcal{C}_{0}^{\infty}(\N)$ be a real-valued test function such that $f\equiv 1 $ on $\Psi(\M)$. We define the bounded self-adjoint operator $A$
\begin{equation}
 {A} \coloneqq if\mathds{E}_\N f \; ,
 \label{A-iE}
\end{equation}
where $f$ acts by multiplication on $L^2(\N)$. If we replace $f$ by the characteristic function of $\M$ we obtain the operator analyzed in \cite{FewsterVerch-SJ12}.

A state can then be constructed in the same way as in the quoted literature by taking the positive part $A^+$ of $A$ (in the sense of spectral calculus).
\begin{equation}
 A^{+}=P^+A,
\end{equation}
where $P^+$ is the  spectral projection on the interval $[0,||A||]$.

The modified S-J state $\omega_{SJ_{f}}$ is now defined as the quasifree state on the spacetime $\M$ whose two-point function is given by
\begin{equation}
W_{SJ_{f}}(q,r) \coloneqq (  q , A^{+}r  ) \; ,
\label{W-SJ}
\end{equation}
for real-valued test functions $q,r$ on $\M$. Note that the antisymmetric part of  the two-point function coincides with $i\mathds{E}_\M$. This is due to the fact that the intersection of the kernel of $A$ with $L^2(\M)$ coincides with the kernel of $\mathds{E}_\M$. In particular, the integral kernel of $A^{+}$, restricted to $\M$, is a bisolution of the Klein-Gordon equation. This bisolution can be uniquely extended to the domain of dependence of $\M$ (which coincides with $\N$ in the case considered here). The state $\omega_{SJ_{f}}$ is a pure state, as can be seen in the following

\begin{Theorem}
Let $\M$ be a globally hyperbolic subspacetime of another globally hyperbolic spacetime $\N$, and let $\Sigma\subset\M$ be a Cauchy surface of $\N$. Then for every real-valued $f\in\mathcal{C}^{\infty}_0(\N)$ with $f\equiv1$ on $\M$, the modified S-J state
\[\omega_{SJ_f}(W(\phi))=e^{-\frac12(\phi,|f\mathds{E}_\N f|\phi)}\]
with $\phi\in\mathcal{C}^{\infty}_0(\M)$, is pure. Here $\mathds{E}_\N$ is the commutator function on $\N$ and $|\cdot|$ denotes the modulus of the operator.
\end{Theorem}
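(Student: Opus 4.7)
The proof will verify purity via the Kay--Wald criterion: exhibit, on the one-particle real Hilbert space $H_\mu$ obtained by completing $L_\M/\ker\mu$ in the norm associated to the symmetric form $\mu(q,r)=\tfrac12(q,|f\mathds{E}_\N f|r)$, a bounded operator $J$ with $J^2=-\mathbf 1_{H_\mu}$, antisymmetric with respect to $\mu$, and satisfying $\sigma_\M(q,r)=2\mu(q,Jr)$. The key tool is the polar decomposition of the bounded real antisymmetric operator $T:=f\mathds{E}_\N f$ on $L^2_\mathbb R(\N)$: writing $T=V|T|=|T|V$, the factor $V$ is antisymmetric with $V^2=-P_0$, where $P_0$ projects onto $(\ker T)^\perp$. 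Note that $A=iT$ is self-adjoint and anticommutes with complex conjugation, so $|A|=|T|$ and $|T|$ preserves real-valuedness.

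First I decompose, for $q,r\in L_\M$ (where $f\equiv 1$), $W(q,r)=(q,A^+r)=\tfrac12(q,|T|r)+\tfrac{i}{2}(q,Tr)$, identifying $\mu$ as above and $\sigma_\M(q,r)=(q,Tr)$, which agrees with the symplectic form on $\M$ via $\mathds{E}_\N|_{L_\M}=\mathds{E}_\M$. The majorization $|\sigma(q,r)|^2\le 4\mu(q,q)\mu(r,r)$ follows by rewriting $\sigma(q,r)=(|T|^{1/2}q,V|T|^{1/2}r)$ and applying Cauchy--Schwarz with $\|V\|\le 1$. Next, I set $J:=-V$ restricted to $H_\mu$, which is identified with a closed real subspace of $L^2_\mathbb R(\N)$ through the isometry $q\mapsto \tfrac{1}{\sqrt 2}|T|^{1/2}q$. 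The relations $\mu(q,Jr)=\tfrac12\sigma(q,r)$ and the $\mu$-antisymmetry of $J$ are immediate from the polar decomposition. Since $H_\mu$ lies in the range of $|T|^{1/2}$, hence in $(\ker T)^\perp$, one has $V^2|_{H_\mu}=-\mathbf 1_{H_\mu}$ automatically, \emph{provided} $V$ leaves $H_\mu$ invariant.

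The main obstacle is precisely this invariance $V(H_\mu)\subset H_\mu$. Using $V|T|^{1/2}=|T|^{1/2}V$, it reduces to the density claim that every $Vq$ with $q\in L_\M$ can be approximated in the $|T|^{1/2}$-seminorm by elements of $L_\M$, i.e.\ $L_\M$ is dense in $L^2_\mathbb R(\N)/\ker T$ for that seminorm. The hypothesis that $\Sigma\subset\M$ is a Cauchy surface of \emph{$\N$} (and not merely of $\M$) is what secures this: modulo $\ker T$, every class corresponds via $\mathds{E}_\N$ to a Klein--Gordon solution on $\N$, which is uniquely determined by its Cauchy data on $\Sigma$, and any such data can be realized by test functions supported in a collar neighborhood of $\Sigma$ inside $\M$. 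Once the invariance is established, $J$ is a genuine complex structure implementing all the conditions of the Kay--Wald criterion, and purity of $\omega_{SJ_f}$ follows.
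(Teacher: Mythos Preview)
Your approach is correct but differs from the paper's. The paper does not build the complex structure on $H_\mu$ directly via the Kay--Wald criterion. Instead, it first observes that on the \emph{larger} symplectic space $L=\mathrm{Re}\bigl(\mathcal{C}^\infty_0(\N)/\ker\mathds{E}_\N f\bigr)$ with $\sigma([\phi_1],[\phi_2])=(\phi_1,f\mathds{E}_\N f\phi_2)$, the quasifree state determined by $|f\mathds{E}_\N f|$ is pure by the Manuceau--Verbeure theorem, used as a black box. It then proves that this Weyl algebra equals the Weyl algebra over $\M$ by showing the symplectic spaces coincide: every class $[\phi]$ with $\phi\in\mathcal{C}^\infty_0(\N)$ has a representative supported in $\M$. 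This is done by a Fulling--Sweeny--Wald style construction: decompose $\phi=\phi_++\psi+\phi_-$ with $\mathrm{supp}\,\phi_\pm\subset J_\pm(\Sigma)$ and $\mathrm{supp}\,\psi\subset\M$, and set $\psi_+=P(1-\chi)\mathds{E}^-_\N f\phi_+$ for a suitable time cutoff $\chi$; then $\psi_+$ is supported in $\M$ and $\phi_+-\psi_+\in\ker\mathds{E}_\N f$, and similarly for $\phi_-$.

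Your route and the paper's hinge on the same core fact. Your density claim for $L_\M$ in $L^2_\mathbb{R}(\N)/\ker T$ with respect to the $|T|^{1/2}$-seminorm is exactly what the paper's surjectivity statement gives (after an $L^2$-approximation of a general element by a smooth compactly supported one, then the FSW step). Your justification via ``Cauchy data can be realized by test functions in a collar of $\Sigma$'' is the right intuition, but as stated it only treats smooth representatives; to close the argument cleanly you should insert the preliminary $L^2$-approximation before invoking the Cauchy-data/FSW step. What your approach buys is an explicit complex structure $J=-V$ on the one-particle space, which can be useful for later constructions; what the paper's approach buys is brevity, since Manuceau--Verbeure absorbs the functional-analytic verification and the remaining work is purely the symplectic identification.
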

\begin{proof}
We consider the Weyl algebra over the symplectic space $(L,\sigma)$ with, now, 
\[L=\mathrm{Re}\left(\mathcal{C}^{\infty}_0(\N)/\mathrm{Ke}\mathds{E}_\N f\right)\]
and
\[\sigma([\phi_1],[\phi_2])=(\phi_1,f\mathds{E}_\N f\phi_2) \; .\]
Due to the compactness of the support of $f$, the operator $f\mathds{E}_\N f$ is bounded on this Hilbert space, and, according to the results of Manuceau and Verbeure \cite{ManuceauVerbeure68} mentioned in the Introduction, we can define a pure state on the Weyl algebra by setting 
\[\omega(W(\phi))=e^{-\frac12(\phi,|f\mathds{E}_\N f|\phi)}\]
where $|f\mathds{E}_\N f|=\sqrt{-f\mathds{E}_\N f^2\mathds{E}_\N f}$. 

It remains to prove that the Weyl algebra above coincides with the Weyl algebra over $\M$ with the symplectic form defined by the commutator function $\mathds{E}_\M$ on $\M$. 
For this purpose we prove that the corresponding symplectic spaces are equal. Since the restriction of $\mathds{E}_\N$ to $\M$ coincides with $\mathds{E}_\M$ and since $f\equiv1$ on $\M$, the symplectic space associated to $\M$ is a symplectic subspace of $(L,\sigma)$. We now show that this subspace is actually equal to $(L,\sigma)$. This amounts to prove that every rest class $[\phi]\in L$ with $\phi\in\mathcal{C}^{\infty}_0(\N)$ contains an element $\phi_0$ with $\mathrm{supp}\phi_0\subset \M$.

Here we proceed similarly to Fulling, Sweeny and Wald \cite{FullingSweenyWald78}. Let $\phi\in\mathcal{C}^{\infty}_0(\N)$. We may decompose $\phi=\phi_++\psi+\phi_-$ with $\mathrm{supp}\phi_\pm\subset J_\pm(\Sigma)$ and $\mathrm{supp}\psi\subset\M$. Let $\chi\in\mathcal{C}^{\infty}(\N)$ such that $\chi\equiv1$ on $J_+(\Sigma)$ and $\mathrm{supp}\chi\subset J_+(\Sigma_-)$ for a Cauchy surface $\Sigma_-$ of $\M$ in the past of $\Sigma$. Set
\[\psi_+=P(1-\chi)\mathds{E}^-_\N f\phi_+\]
where $P$ is the Klein Gordon operator and $\mathds{E}_\N^-$ the advanced propagator. By the required properties of $\chi$, $\psi_{+}$ vanishes where $\chi$ is constant, hence $\mathrm{supp}\psi_+\subset \M$. In particular $f\psi_+=\psi_+$. We are left with showing that $\phi_+-\psi_+\in\mathrm{Ke}E_\N f$,
\[\mathds{E}_\N f(\phi_+-\psi_+)=\mathds{E}_\N(f\phi_+-\psi_+)=\mathds{E}_\N P\chi \mathds{E}^-_\N f\phi_+=0\ ,\]
where in the last step we used the fact that $\chi \mathds{E}^-_N f\phi_+$ has compact support. For $\phi_-$ an analogous argument works and yields an element $\psi_-\in[\phi_-]$ with $\mathrm{supp}\psi_-\subset \M$. Thus we find that $\phi_0=\psi_++\psi+\psi_-$ has the properties required above.

\end{proof}

The question now arises whether the modified S-J states are Hadamard states. We will prove this to be true in two situations, static spacetimes and expanding spacetimes. We remark that the proofs rely only upon the fact that $f\in \mathcal{C}_{0}^{\infty}(\N)$ is a real-valued test function such that $f_{\upharpoonright \M}\equiv 1 $. 
If we change $f$ we will obtain in general a different Hadamard state. Thus the states we construct here are not uniquely singled out by the spacetime geometry.

In both types of spacetime, the operator $\mathds{E}$ can be decomposed into a sum over the eigen projections $|\psi_{j}\rangle\langle \psi_{j}|$ of the spatial part of the Klein-Gordon operator. We choose our cutoff function $f$ to depend only on time. It remains then to analyze for each $j$ the operators $A_j$ defined as
\begin{equation}
A(t,\uline{x};t',\uline{x'}) \eqqcolon \sum_{j}A_{j}(t',t)\psi_{j}(\uline{x})\overline{\psi}_{j}(\uline{x'})\; .
\label{A_j}
\end{equation}

\subsection{Static spacetimes}

Taken as an operator on $L^{2}(\mathbb{R})$, $A_{j}$ has the integral kernel (see \eqref{E-static})
\begin{equation}
 A_{j}(t',t)=\frac{i}{\omega_{j}}f(t')\left(\sin (\omega_{j}t'-\theta_j)\cos (\omega_{j}t-\theta_j)-\cos (\omega_{j}t'-\theta_j)\sin (\omega_{j}t-\theta_j)\right)f(t).
\end{equation}
This expression does not depend on the phase $\theta_j$ due to the addition theorem of trigonometric functions.
We choose 
$\theta_j$ such that
\[\int dtf(t)^2\cos(\omega_jt-\theta_j)\sin(\omega_jt-\theta_j)=0\ .\]
Such a choice is possible since the integrand changes its sign if $\theta_j$ is shifted by $\pi/2$.

Since $A_{j}^{*}(t',t)\equiv \overline{A_{j}(t,t')}$, we find
\begin{equation}
|A_j|(t',t)\equiv \left(A_{j}^{*}A_{j}\right)^{1/2}(t',t)=\frac{1}{\omega_j^2}\left(||S_j||^2C_j(t)C_j(t')+||C_j||^2S_j(t)S_j(t')\right)
\end{equation}  
with
\[S_j(t)=f(t)\sin(\omega_jt-\theta_j)\ ,\ C_j(t)=f(t)\cos(\omega_jt-\theta_j)\ ,\]
\[||S_j||^2 \coloneqq \int dtS_{j}(t)^{2}\; ,\]
and similarly for $||C_j||^2$. We further note that $A_{j}^{+}=(A_{j}+|A_{j}|)/2$. Hence the positive part of $A_j$ has the integral kernel
\[A_j^+(t',t)=\frac{1}{2\omega_j||C_j||||S_j||}\left(||S_j||C_j(t)-i||C_j||S_j(t)\right)\left(||S_j||C_j(t')+i||C_j||S_j(t')\right)\ .\]

Setting
\begin{equation}
 \delta_{j}\coloneqq 1-\frac{||C_{j}||}{||S_{j}||}\; ,
 \label{delta}
\end{equation}
we write
\begin{equation}
 A_{j}^{+}(t,t')=
 \frac{1}{2\omega_j}\left(\frac{1}{1-\delta_j}C_j(t)-iS_j(t)\right)\left(C_j(t')+i(1-\delta_j)S_j(t')\right)\; .
\end{equation}
Therefore, the arising two-point function on $\M$ is
\begin{align}
 W_{SJ_{f}}(t,\uline{x};t',\uline{x}')=\sum_{j}\frac{1}{2\omega_{j}}&\left(\frac{1}{1-\delta_{j}}C_j(t)-iS_j(t)\right)\left(C_j(t')+i(1-\delta_j)S_j(t')\right)\psi_j(\uline{x}) \overline{\psi}_j(\uline{x}') \; .
 \label{W-SJ-static}
\end{align}

A practical way to verify that this state is a Hadamard state is to compare it with another Hadamard state and check whether the difference $w$ of the two-point functions is smooth. For this comparison, we use the two-point function of the static ground state, restricted to $\M$. 
\begin{equation}
 W_{0}(t,\uline{x};t',\uline{x}')=\sum_{j}\, \frac{e^{-i\omega_{j}(t-t')}}{2\omega_{j}}\psi_{j}(\uline{x})\overline{\psi}_{j}(\uline{x}') \; .
 \label{W-H-static}
\end{equation}
For $\delta_j=0$ it coincides with \eqref{W-SJ-static}. Further we note that multiplying this function by $f(t)f(t')$ gives the same function, since $f_{\upharpoonright \M}\equiv 1$.


We state our result as a theorem:

\begin{Theorem}\label{Had-static}
Let $\N=\mathbb{R}\times \Sigma$ be a static spacetime with metric $g=a^2dt^2-h$, where $h$ is a Riemannian metric on the compact manifold $\Sigma$ and $a$ is a smooth everywhere positive function on $\Sigma$. Let $I$ be a finite interval and $f$ a smooth real-valued function on $\mathbb{R}$ with compact support which is identical to 1 on $I$. Then the modified S-J-state $\omega_{SJ_f}$ as constructed above on $\M=I\times\Sigma$ is a Hadamard state.  
\end{Theorem}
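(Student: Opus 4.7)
The strategy is to compare $W_{SJ_f}$ with the restriction to $\M$ of the static ground-state two-point function
\begin{equation*}
W_0(t,\uline{x};t',\uline{x}')=\sum_j\frac{e^{-i\omega_j(t-t')}}{2\omega_j}\,\psi_j(\uline{x})\overline{\psi}_j(\uline{x}'),
\end{equation*}
which is known to be a Hadamard state on $\N$. Both $W_{SJ_f}$ and $W_0$ have antisymmetric part $\tfrac{i}{2}\mathds{E}_\M$ on $\M$, so the difference $w\coloneqq W_{SJ_f}-W_0$ is symmetric, and by the wavefront argument recalled in Section \secref{secstates} it suffices to prove $w\in\mathcal{C}^\infty(\M\times\M)$ to conclude that $W_{SJ_f}$ is Hadamard. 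Because $f\equiv 1$ on $\M$, nothing is lost by studying $f(t)f(t')\,w$ as a distribution on $\N\times\N$.

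Expanding $f(t)e^{-i(\omega_jt-\theta_j)}=C_j(t)-iS_j(t)$, one finds that the off-diagonal terms $C_j(t)S_j(t')$ and $S_j(t)C_j(t')$ coincide in $W_{SJ_f}$ and in $f(t)f(t')W_0$, so they drop out of the difference, which reduces to
\begin{equation*}
f(t)f(t')\,w(t,\uline{x};t',\uline{x}')=\sum_j\frac{\delta_j}{2\omega_j}\!\left[\frac{1}{1-\delta_j}C_j(t)C_j(t')-S_j(t)S_j(t')\right]\!\psi_j(\uline{x})\overline{\psi}_j(\uline{x}').
\end{equation*}

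The key estimate is that $\delta_j$ decays faster than any inverse power of $\omega_j$. From $\|S_j\|^2+\|C_j\|^2=\int f^2\,dt$ and
\begin{equation*}
\|C_j\|^2-\|S_j\|^2=\int f(t)^2\cos\!\bigl(2(\omega_jt-\theta_j)\bigr)\,dt,
\end{equation*}
repeated integration by parts, permitted because $f\in\mathcal{C}^\infty_0(\mathbb{R})$, yields $\|C_j\|^2-\|S_j\|^2=O(\omega_j^{-N})$ for every $N$. Since both norms tend to $\tfrac12\int f^2\,dt>0$, one deduces $\delta_j=O(\omega_j^{-N})$ for every $N$ and that $1-\delta_j$ remains bounded away from zero.

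Finally, the smoothness of $w$ follows from term-by-term differentiation of the resulting series. Each derivative $\partial_t^a\partial_{t'}^b$ applied to $C_j,S_j$ costs at most a factor $\omega_j^{a+b}$, while the spatial derivatives of $\psi_j\overline{\psi}_j$ grow only polynomially in $\omega_j$ by standard elliptic Sobolev estimates on the compact manifold $\Sigma$ combined with Weyl's asymptotic for the eigenvalues. These polynomial factors are crushed by the super-polynomial decay of $\delta_j/\omega_j$, so any finite-order differentiated series converges absolutely and uniformly on compact subsets of $\M\times\M$; hence $w\in\mathcal{C}^\infty(\M\times\M)$ and $W_{SJ_f}$ is Hadamard. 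I expect the principal obstacle to be precisely this quantitative step: one must extract the super-polynomial decay of $\delta_j$ from a single oscillatory integral and combine it with uniform polynomial bounds on eigenfunctions in order to overcome the growing frequencies produced by each derivative, while the rest of the argument is essentially algebraic.
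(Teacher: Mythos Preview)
Your proposal is correct and follows essentially the same approach as the paper: compare $W_{SJ_f}$ with the static ground state $W_0$, compute the explicit difference series governed by $\delta_j$, and show smoothness by combining super-polynomial decay of $\delta_j$ (via rapid decay of the Fourier transform of $f^2$, equivalently your repeated integration by parts) with polynomial growth of the eigenfunction derivatives controlled by elliptic estimates and Weyl's asymptotics. The only cosmetic difference is that the paper organizes the convergence in $L^2(\M\times\M)$ for all derivatives rather than via uniform sup-norm bounds, and writes the final estimate as $\sum_j\omega_j^n\delta_j\le(\sum_j\omega_j^{-p})(\sup_k\omega_k^{n+p}\delta_k)<\infty$.
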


\begin{proof}
The difference $\colon W_{SJ_{f}}\colon$ between $ W_{SJ_{f}}$ and $W_{0}$ is 
\begin{equation}
 \colon W_{SJ_{f}}\colon (t,\uline{x};t',\uline{x}')=\sum_{j}\, \frac{\delta_{j}}{2\omega_{j}}\left[\frac{1}{1-\delta_{j}}C_{j}(t')C_{j}(t)-S_{j}(t')S_{j}(t)\right]\psi_{j}(\uline{x})\overline{\psi}_{j}(\uline{x}') \; .
 \label{W-SJH-static}
\end{equation}
To prove that $\omega_{SJ_{f}}$ is a Hadamard state it suffices to show that $\colon W_{SJ_{f}}\colon$ is smooth.
Since the eigenfunctions $\psi_j$ of the elliptic operator $K$ are smooth, each term in the expansion above is smooth, and it suffices to prove that the sum 
converges in the sense of smooth functions. This can be done by proving that, for all derivatives, the sum converges in $L^2(\M\times\M)$.

For this purpose we first exploit that the $L^2$-norms of derivatives of functions on $\Sigma$ can be estimated in terms of the operator $K$. Namely, for every differential operator $D$ of order $n$ on $\Sigma$ there exists a constant $c_D>0$ such that 
\[||D\psi||_2\le c_D||K^{m}\psi||_2\]  
with $m$ the smallest integer larger than or equal to $n/2$ \cite{Hormander-III}. Hence spatial derivatives of the functions $\psi_j$ can be absorbed by multiplication with the corresponding eigenvalues of $K$. Similarly, time derivatives amount to multiplication with factors $\omega_j$ and exchanges between the functions $S_j$ and $C_j$. Since their $L^2$-norms are uniformly bounded in $j$, it remains to show that
\[\sum_j\omega_j^n\delta_j<\infty\ \forall\ n\in\mathbb{N}_0 \ .\]
We first observe that $||C_{j}||^{2}$ and $||S_{j}||^{2}$ can be expressed in terms of the Fourier transform of the square of the test function $f$:
\begin{equation}
||C_{j}||^{2}=\int\textrm{d}t\, f(t)^{2}\left(\frac{e^{2i(\omega_{j}t-\theta)}+e^{-2i(\omega_{j}t-\theta)}+2}{4}\right)=\frac{1}{2}+\frac{\widetilde{f^{2}}(2\omega_{j})e^{-2i\theta}+\widetilde{f^{2}}(-2\omega_{j})e^{2i\theta)}}{4}
\end{equation}
and
\begin{equation}
||S_{j}||^{2}=\frac{1}{2}-\frac{\widetilde{f^{2}}(2\omega_{j})e^{-2i\theta}+\widetilde{f^{2}}(-2\omega_{j})e^{2i\theta}}{4} \; .
\end{equation}
Since $f$ is a smooth test function, so is $f^{2}$, and $\forall n\in\mathds{R}$,
\begin{equation}
\lim_{\omega\rightarrow \infty}\omega^{n}\widetilde{f^{2}}(2\omega)=0 \; .
\label{omega-f-smooth}
\end{equation}
It follows immediately that
\begin{equation}
\lim_{j\rightarrow \infty}\omega_{j}^{n}\delta_{j}=0 \; .
\label{omega-delta-smooth}
\end{equation}
 
The last information we need concerns the behavior of the eigenvalues of $K$. Here we use the fact that an elliptic operator on a $d$-dimensional compact space has a resolvent which is in the Schatten classes $L_{d/2+\epsilon}$, $\epsilon>0$ \cite{LotoRohl11}. Hence $\sum_j\omega_j^{-p}<\infty$ for a suitable $p\in\mathbb{N}$, and we finally obtain the estimate
\[\sum_j\omega_j^n\delta_j\le (\sum_j\omega_j^{-p})(\sup_k\omega_k^{n+p}\delta_k)\le \infty\ .\]
\end{proof}

Before we proceed to the case of expanding spacetimes, we remark that the smoothness of the function $f$ was crucial for getting a Hadamard state. The state depends via the expansion coefficients $\delta_j$ and the phases $\theta_j$ on the values of the Fourier transform of $f^2$ at the points $2\omega_j$, and it is the fast decrease of these values as $j$ tends to infinity which implies the Hadamard property. Hence,
if $f\notin \mathcal{C}_{0}^{\infty}(\mathbb{R})$ then, in general,  \eqref{omega-f-smooth} and \eqref{omega-delta-smooth} would not be satisfied, and the state would not be a Hadamard state.

\subsection{Expanding spacetimes}

The advanced-minus-retarded-operator is now
\begin{equation}
 \mathds{E}(t,\uline{x};t',\uline{x'})=\sum_{j}\frac{(\overline{T}_{j}(t)T_{j}(t')-T_{j}(t)\overline{T}_{j}(t'))}{2i}\psi_{j}(\uline{x})\overline{\psi}_{j}(\uline{x'}) \; .
 \label{E-prop-expanding-st}
\end{equation}
We decompose $fT_j$ into its real and imaginary parts, $fT_j=B_{j}-iD_j$, and obtain for the integral kernel of the operator $A_j$
\begin{equation}
 A_j(t',t)=i\left(D_{j}(t')B_{j}(t)-B_{j}(t')D_{j}(t)\right) \; .
\end{equation}
$A_j$ is a self-adjoint antisymmetric rank 2 operator.

We can choose the phase of $T_{j}$ such that
\begin{equation}
 \int B_{j}(t)D_{j}(t)dt \equiv 0 \; .
 \label{B-D-0}
\end{equation}

Analogous to the static case we obtain
\begin{equation}
 A_{j}^{+}(t',t)=
 \frac{1}{2||B_{j}||||D_{j}||}\left(||D_{j}||B_{j}(t')-i||B_{j}||D_{j}(t')\right)\left(||D_{j}||B_{j}(t)+i||B_{j}||D_{j}(t)\right) \; .
 \label{A-pos}
\end{equation}
Setting again
\begin{equation}
\delta_j=1-\frac{||B_{j}||}{||D_{j}||} \; ,
\end{equation}
we find for the two-point function of the modified S-J state on $\M$
\begin{equation}
 W_{SJ_{f}}(t,\uline{x};t',\uline{x}')=\sum_j \frac{1}{2}\left(\frac{1}{1-\delta_j}B_{j}(t')-iD_{j}(t')\right)\left(B_{j}(t)+i(1-\delta_j)D_{j}(t)\right)\psi_{j}(\uline{x})\overline{\psi}_{j}(\uline{x}') \; .
 \label{W-SJ-expand}
\end{equation}

We now investigate the wavefront set of this two-point function. We proceed as in the proof of the Hadamard condition for states of low energy \cite{Olbermann07,ThemBrum13} by comparing it with the two-point functions of adiabatic states of finite order. According to \cite{JunSchrohe02} adiabatic states of order $n$ have the same Sobolev wavefront sets as Hadamard states if $s<n+\frac{3}{2}$. It therefore suffices to prove that for all $n$ the two-point functions \eqref{W-SJ-expand} and that corresponding to an adiabatic state differ only by a function which is in the local Sobolev space of order $s$ satisfying the above inequality. We will present some further properties of adiabatic states before proceeding to the proof that $\omega_{SJ_{f}}$ is a Hadamard state.

We choose for the solution $T_{j}$ the solution with the initial conditions at $t_{0}$ implied by the $n$-fold iteration of the adiabatic ansatz. For sufficiently large $j$, $T_{j}$ is uniquely determined. It can be approximated by the WKB form
\begin{equation}
W_{j}^{(n)}(t)=\frac{1}{\sqrt{2\Omega_{j}^{(n)}c(t)^{3}}}\exp\left(i\int_{t_{0}}^{t}dt'\Omega_{j}^{(n)}(t')\right) \; .
\end{equation} 
Here $\Omega_{j}^{(n)}$ is recursively determined from
\begin{align}
 \Omega_{j}^{(0)} &=\omega_{j} \nonumber \\
 (\Omega_{j}^{(n+1)})^{2} &=\omega_{j}^{2}-\frac{3(\dot{c})^{2}}{4c^{2}}-\frac{3\ddot{c}}{2c}+\frac{3(\dot{\Omega}_{j}^{(n)})^{2}}{4(\Omega_{j}^{(n)})^{2}}-\frac{\ddot{\Omega}_{j}^{(n)}}{2\Omega_{j}^{(n)}} \; .
 \label{WKBiteration}
\end{align}
The authors of \cite{LuRo90} proved that for each $n$ there exists some $\lambda>0$ such that for $\lambda_j>\lambda$ the $n$-fold recursion above is well defined ($-\lambda_{j}$ are the eigenvalues of the Laplace operator - see equations \eqref{KG}-\eqref{KGfreq}). Furthermore, they proved that $\Omega_j^{(n)}$ is bounded from below by a constant times $\sqrt{\lambda_j}$, and together with its derivatives, bounded from above by constants times  $\sqrt{\lambda_j}$.
  
The solution $T_j$, at a generic time $t$, can be written as
\begin{equation}
 T_{j}(t)=\left( \alpha_{j}^{(n)}(t)W_{j}^{(n)}(t)+\beta_{j}^{(n)}(t)\overline{W}_{j}^{(n)}(t) \right)e^{i\theta_{j}} \; ,
  \label{adiabatic-lambda}
\end{equation}
where $\theta_{j}$ is the phase factor introduced so that \eqref{B-D-0} is satisfied, and the functions $\alpha_j^{(n)}$ and $\beta_j^{(n)}$ satisfy the estimates (uniformly in $t$ within a bounded interval)
\begin{align}
 \lvert 1-\alpha_{j}^{(n)}(t) \rvert &\leqslant C_{\alpha}(1+\lambda_{j})^{-n-1/2} \nonumber \\
 \lvert \beta_{j}^{(n)}(t) \rvert &\leqslant C_{\beta}(1+\lambda_{j})^{-n-1/2} \; .
\label{alpha-beta-lambda}
\end{align}

The proof that the two-point function \eqref{W-SJ-expand} has the Hadamard property is presented in the following

\begin{Theorem}
Let $\N=J\times \Sigma$ be an expanding spacetime with $\Sigma$ compact and $J$ an open interval on the real axis. Let $I$ be a finite open interval with closure contained in $J$, and let 
$f\in\mathcal{C}^{\infty}_0(J)$ such that $f$ is equal to 1 on $I$. Then   
the modified Sorkin-Johnston state $\omega_{SJ_{f}}$ as defined above is a Hadamard state on the expanding spacetime $\M=I\times\Sigma$.
\end{Theorem}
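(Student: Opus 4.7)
The plan is to compare the two-point function $W_{SJ_f}$ with the two-point function $W^{\mathrm{ad},n}$ of an adiabatic state of order $n$, and to exploit the Junker--Schrohe result that $W^{\mathrm{ad},n}$ agrees with a Hadamard two-point function modulo a distribution in $H^{s}_{\mathrm{loc}}$ for every $s<n+\frac{3}{2}$. It therefore suffices, given any fixed $s$, to choose $n$ so large that $W_{SJ_f}-W^{\mathrm{ad},n}\in H^{s}_{\mathrm{loc}}(\M\times\M)$; letting $s\to\infty$ then yields the full Hadamard condition for $\omega_{SJ_f}$. Since both two-point functions are given by mode expansions over the orthonormal basis $\{\psi_j\}$, the task reduces to proving that, for each fixed pair of differential operators applied in $t,t',\uline{x},\uline{x}'$, the modewise difference sums absolutely in $L^{2}(\M\times\M)$, with spatial derivatives absorbed into powers of $K$ via the elliptic estimate $\|D\psi_j\|_{2}\le c_D\|K^{\lceil p/2\rceil}\psi_j\|_{2}$ used in the static case.

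To write the modewise difference explicitly, I would substitute the adiabatic decomposition $T_j=(\alpha_j^{(n)}W_j^{(n)}+\beta_j^{(n)}\overline{W_j^{(n)}})e^{i\theta_j}$ into $B_j=\mathrm{Re}(fT_j)$ and $D_j=-\mathrm{Im}(fT_j)$, and write $W^{\mathrm{ad},n}$ in the same WKB basis. The difference $W_{SJ_f}-W^{\mathrm{ad},n}$ then splits into two kinds of contributions: terms carrying a factor $(1-\alpha_j^{(n)})$ or $\beta_j^{(n)}$, which by the L\"uders--Roberts estimate \eqref{alpha-beta-lambda} are already $O((1+\lambda_j)^{-n-1/2})$, and terms proportional to $\delta_j/(1-\delta_j)$ multiplying smooth products of $W_j^{(n)}$ and $\overline{W_j^{(n)}}$.

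The technical core is then to show that $\omega_j^{N}\delta_j\to 0$ for every $N\in\mathbb{N}_{0}$. Using the phase normalization \eqref{B-D-0} one has
\[\|B_j\|^2-\|D_j\|^2=\mathrm{Re}\!\int f(t)^2\,T_j(t)^2\,dt,\qquad \|B_j\|^2+\|D_j\|^2=\int f(t)^2\,|T_j(t)|^2\,dt,\]
where the second quantity is uniformly bounded above and below in $j$ by the L\"uders--Roberts bounds on $\Omega_j^{(n)}$. Substituting the WKB form into the first quantity produces, up to corrections weighted by $|\beta_j^{(n)}|$, an oscillatory integral with smooth compactly supported amplitude and phase $2\int_{t_0}^{t}\Omega_j^{(n)}(s)\,ds$. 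Since $\Omega_j^{(n)}$ is bounded below by a constant times $\sqrt{\lambda_j}$ and all of its derivatives are bounded above by constants times $\sqrt{\lambda_j}$, the phase has no stationary point and repeated integration by parts yields decay of the integral faster than any polynomial in $\omega_j$; combined with the uniform lower bound on $\|B_j\|\|D_j\|$, this gives the required super-polynomial decay of $\delta_j$.

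I expect the main obstacle to be precisely this non-stationary phase estimate, because the phase $\int_{t_0}^{t}\Omega_j^{(n)}(s)\,ds$ is itself $j$-dependent and accessible only through the recursion \eqref{WKBiteration}; one must check that the integration by parts can be iterated uniformly in $j$ using simultaneous upper bounds on all derivatives of $\Omega_j^{(n)}$ drawn from \cite{LuRo90}. Once this estimate and the estimates on $\alpha_j^{(n)},\beta_j^{(n)}$ are in place, combining the super-polynomial decay of $\delta_j$ with the Weyl/Schatten bound $\sum_j\omega_j^{-p}<\infty$ employed in the static case produces absolute convergence of the differentiated mode sum in $L^{2}(\M\times\M)$ to any desired Sobolev order, and hence the Hadamard property of $\omega_{SJ_f}$.
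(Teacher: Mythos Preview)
Your proposal follows essentially the same route as the paper: compare $W_{SJ_f}$ with adiabatic two-point functions via Junker--Schrohe, reduce to modewise estimates, control the oscillatory integral $\int f^2 (W_j^{(n)})^2$ by repeated integration by parts using the lower bound on $\Omega_j^{(n)}$, and feed in the L\"uders--Roberts bounds on $\alpha_j^{(n)},\beta_j^{(n)}$. Two quantitative points need correcting, though neither breaks the strategy.

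First, $\|B_j\|^2+\|D_j\|^2=\int f^2|T_j|^2$ is \emph{not} uniformly bounded above and below in $j$: since $|W_j^{(n)}|^2=(2c^3\Omega_j^{(n)})^{-1}$ and $\Omega_j^{(n)}$ is comparable to $\sqrt{\lambda_j}$, this quantity is of order $\lambda_j^{-1/2}$. The paper records this explicitly. It matters when you divide to form $\delta_j$, and it is the reason the factor $\lambda_j^{-1/2}$ does not simply cancel.

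Second, for \emph{fixed} $n$ the numerator $\|B_j\|^2-\|D_j\|^2=\mathrm{Re}\int f^2 T_j^2$ does not decay faster than every polynomial: the pure oscillatory piece $\int f^2 (W_j^{(n)})^2$ does, but the cross-term proportional to $\alpha_j^{(n)}\overline{\beta_j^{(n)}}\int f^2|W_j^{(n)}|^2$, which you flagged as a ``correction weighted by $|\beta_j^{(n)}|$'' and then dropped, is only $O(\lambda_j^{-n-1})$. After dividing by $\|D_j\|^2\sim\lambda_j^{-1/2}$ one obtains $\delta_j=O(\lambda_j^{-n-1/2})$, which is exactly what the paper proves and uses. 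You can still \emph{conclude} super-polynomial decay of $\delta_j$ a posteriori, because $\delta_j$ is determined by the exact solution (through the phase normalisation $\int B_jD_j=0$) and is independent of $n$, so the family of estimates $\delta_j=O(\lambda_j^{-n-1/2})$ for every $n$ forces rapid decay; but your text as written does not supply this step. The paper sidesteps the issue by working at fixed $n$ throughout, computing $A_j^+-\tfrac12|fT_j\rangle\langle fT_j|$ directly in terms of $\delta_j$, and showing the resulting operator with weight $\lambda_j^{s}$ is Hilbert--Schmidt for $s<n+\tfrac34-\tfrac{m}{4}$ before letting $n$ vary.
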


\begin{proof}
We want to show that for each $s>0$ there is an $n\in\mathbb{N}$ such that the difference of the 2-point functions of the state $\omega_{SJ_f}$ and the adiabatic state of $n$th order is an element of the Sobolev space of order $s$. As in the static case we use the fact that spatial derivatives can be estimated in terms of the elliptic operator and amount to multiplication with powers of the corresponding eigenvalues 
$\lambda_j$. For the time derivatives we exploit that the functions $T_j$ are solutions of a second order differential equation which again allows to replace derivatives by multiplication with powers of $\lambda_j$.  

Therefore, In order to verify the Hadamard property of $W_{SJ_{f}}$, we investigate for which index $s\in\mathbb{R}$ the operator
\begin{equation}
R_s=\sum_j \lambda_j^{s}(A^+_j-\frac{1}{2}|fT_j\rangle\langle fT_j|)\otimes |\psi_{j}\rangle\langle \psi_{j}|
\end{equation}  
is Hilbert-Schmidt. For this purpose we have to estimate the $L^{2}$ scalar products of the WKB functions. We have
\begin{equation}
\left(fW_j^{(n)},fW_j^{(n)}\right)_{L^{2}}=\int dtf(t)^2\frac{1}{2c(t)\Omega_j^{(n)}(t)}
\label{fW.fW}
\end{equation} 
which can be bounded from above and from below by a constant times $(1+\lambda_j)^{-\frac12}$.
On the other hand, the scalar product
\begin{equation}
\left(\overline{fW}_j^{(n)},fW_j^{(n)}\right)_{L^{2}}=\int dtf(t)^{2}\frac{1}{2c(t)\Omega_j^{(n)}(t)}\exp{2i\int_{t_0}^t\Omega_j^{(n)}(t')dt'}
\label{fW*.fW}
\end{equation}
is rapidly decaying in $\lambda_j$. This follows from the stationary phase approximation. It can be directly seen by exploiting the identity
\[\exp{2i\int_{t_0}^t\Omega_j^{(n)}(t')dt'}=\frac{1}{2i\Omega_j^{(n)}(t)}\frac{\partial}{\partial t}\exp{2i\int_{t_0}^t\Omega_j^{(n)}(t')dt'}\]
several times and subsequent partial integration. The estimates on $\Omega_j^{(n)}$ and its derivatives together with the smoothness of $c$ and $f$ then imply the claim.

Now, the term $(A^+_j-\frac{1}{2}|fT_j\rangle\langle fT_j|)$ reads
\begin{align}
 A_{j}^{+}(t',t)-\frac{f(t')T_{j}(t')f(t)\overline{T}_{j}(t)}{2} &=\frac{1}{8(1-\delta_{j})}f(t')\left\{(\delta_{j})^{2}\left(\overline{T}_{j}(t')T_{j}(t)+T_{j}(t')\overline{T}_{j}(t)\right) \right. \nonumber \\
 &\left. +2\textrm{Re}\left[\delta_{j}(2-\delta_{j})T_{j}(t')T_{j}(t)\right]\right\}f(t) \; .
 \label{A-fT-expand}
\end{align}

On the static case, the terms making up $\delta_{j}$ were written as combinations of the Fourier transform of a smooth function. This is no longer valid in the expanding case. We now have
\begin{align*}
||B_{j}^{(n)}||^{2} &=\int\textrm{d}t\, B_{j}^{(n)}(t)^{2} \\
&=\frac{1}{2}\int\textrm{d}t\, f(t)^{2}\left[\textrm{Re}\left((\alpha_{j}^{(n)}(t)+\overline{\beta}_{j}^{(n)}(t))^{2}W_{j}^{(n)}(t)^{2}\right)+\lvert\alpha_{j}^{(n)}(t)+\overline{\beta}_{j}^{(n)}(t)\rvert^{2}\lvert W_{j}^{(n)}(t) \rvert^{2}\right]
\end{align*}
and
\begin{align*}
||D_{j}^{(n)}||^{2} &=\int\textrm{d}t\, D_{j}^{(n)}(t)^{2} \\
&=\frac{1}{2}\int\textrm{d}t\, f(t)^{2}\left[\lvert\alpha_{j}^{(n)}(t)+\overline{\beta}_{j}^{(n)}(t)\rvert^{2}\lvert W_{j}^{(n)}(t) \rvert^{2}-\textrm{Re}\left((\alpha_{j}^{(n)}(t)+\overline{\beta}_{j}^{(n)}(t))^{2}W_{j}^{(n)}(t)^{2}\right)\right]
\end{align*}
Taking into account \eqref{alpha-beta-lambda} and the estimates below equations \eqref{fW.fW} and \eqref{fW*.fW}, we get
\begin{equation*}
 \delta_{j}=\mathcal{O}(\lambda_{j}^{-n-1/2}) \; .
\end{equation*}
The pre-factor of the first term in \eqref{A-fT-expand} is of order
\[(\delta_{j})^{2}=\mathcal{O}(\lambda_{j}^{-2n-1}) \; ,\]
while the one of the second term,
\[(\delta_{j})(2-\delta_{j})=\mathcal{O}(\lambda_{j}^{-n-1/2}) \; .\]
This last one imposes more stringent restrictions. We then obtain for the Hilbert-Schmidt norm of $R_s$
\begin{equation}
||R_s||^2_2\le\sum_j (1+\lambda_j)^{2s-2n-3/2} \; .
\end{equation}  
For the Laplacian on a compact Riemannian space of dimension $m$ we know from Weyl's estimate \cite{Jost11} that $\lambda_j$ is bounded by some constant times $j^{\frac{2}{m}}$. Hence the Hilbert-Schmidt norm of $R_s$ is finite if $s<n+\frac{3}{4}-\frac{m}{4}$.

The modified S-J states are independent of the order of the adiabatic approximation. They thus have the same Sobolev wavefront sets as Hadamard states for every index $s$ and therefore fulfill the Hadamard condition.
\end{proof}

We remark further that for $f\notin \mathcal{C}_{0}^{\infty}(\M)$, the proof that the scalar product \eqref{fW*.fW} decays faster than any power of $\lambda_{j}$ breaks down, and thus there could be some $s$ for which the operator $R_s$ would not be Hilbert-Schmidt, thus $\omega_{SJ_{f}}$ would not be a Hadamard state.

\section{Conclusions}

We propose a new class of states of a free scalar field on globally hyperbolic spacetimes which arise from a variation of the proposal of Sorkin and Johnston. We tested this idea in a class of spacetimes and proved that these states are well defined pure Hadamard states. They are, however, in contrast to the S-J states, not uniquely associated to the spacetime. Several interesting questions might be posed.

First one would like to generalize the construction to generic hyperbolic spacetimes which are relatively compact subregions of another spacetime. This involves some technical problems but we do not see an unsurmountable obstruction. In good cases these states (as also the original S-J states) might converge to a Hadamard state as the subregion increases and eventually covers the full larger spacetime. Such a situation occurs in static spacetimes, and it would be interesting to identify the properties of a spacetime on which this procedure works. There is an interesting connection to the proposal of the fermionic projector of Finster \cite{Finster11} where an analogous construction for the Dirac field was considered. The case of the scalar field is however much easier because of the Hilbert space structure of the functions on the manifold, in contrast to the indefinite scalar product on the spinor bundle of a Lorentzian spacetime.    

Another interesting question concerns the physical interpretation. We do not expect that these states should be interpreted as some kind of vacuum, but we would like to better understand the relation of these states with the States of Low Energy. As a first step one may try to numerically analyze the energy momentum tensor in these states, similarly to the work of Degner on States of Low Energy \cite{Degner09}.

\ack

Marcos Brum would like to thank the AQFT group at the University of Hamburg for the hospitality during the period of preparation of this work. MB also acknowledges financial support from the brazilian agency CAPES under Grant nr 869211-4.

\appendix

\section{Proof of essential self-adjointness of $K$}
\renewcommand{\thesection}{A}
\setcounter{section}{1}

\begin{ATheorem}\label{theorem_essential_self-adjointness}
Let $\M=\mathbb{R}\times \Sigma$ be a globally hyperbolic spacetime with metric $g=\alpha^2(dt^2-h)$ where $h$ is a Riemannian metric on the manifold $\Sigma$ and $\alpha$ a smooth nowhere vanishing function on $\Sigma$. Then
\begin{enumerate}
\item $(\Sigma,h)$ is a complete metric space.\label{a}
\item \label{b} The d'Alembertian  on $\M$  is of the form
\[\square_g=\alpha^{-2}(\partial_{t}^{2}+K)\]
where $K$ is an elliptic differential operator on $\Sigma$ which is positive and selfadjoint on $L^2(\Sigma,\alpha^{-2}\sqrt{|\mathrm{det}g|})$ with core $\mathcal{C}^{\infty}_0(\Sigma)$. 
\end{enumerate}
\end{ATheorem}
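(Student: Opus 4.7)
Plan for Theorem A.1.

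Part (a): exploit conformal invariance of global hyperbolicity. The rescaled metric $\tilde g=\alpha^{-2}g=dt^{2}-h$ is ultrastatic, and since the conformal class determines the causal structure, $(\M,\tilde g)$ is globally hyperbolic whenever $(\M,g)$ is. For an ultrastatic product $\mathbb{R}\times\Sigma$ there is a classical equivalence \cite{Kay78} between global hyperbolicity and completeness of $(\Sigma,h)$; the non-trivial direction is readily seen contrapositively, since an $h$-geodesic $\gamma:[0,T)\to\Sigma$ of finite length without endpoint in $\Sigma$ would produce the inextendible null curve $s\mapsto(s,\gamma(s))$ that never meets the slice $\{T\}\times\Sigma$.

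Part (b) I would do in three steps. First a direct coordinate calculation: writing $n=\dim\Sigma$, one has $\sqrt{|\det g|}=\alpha^{n+1}\sqrt{|h|}$, $g^{00}=\alpha^{-2}$, $g^{ij}=-\alpha^{-2}h^{ij}$, so that time-independence of $\alpha$ together with the expansion of $\square_g=\frac{1}{\sqrt{|\det g|}}\partial_\mu(\sqrt{|\det g|}\,g^{\mu\nu}\partial_\nu\,\cdot\,)$ gives
\begin{equation*}
\square_g=\alpha^{-2}(\partial_t^{2}+K),\qquad K\phi=-\frac{1}{\alpha^{n-1}\sqrt{|h|}}\,\partial_i\bigl(\alpha^{n-1}\sqrt{|h|}\,h^{ij}\partial_j\phi\bigr),
\end{equation*}
an elliptic operator with principal symbol $h^{ij}\xi_i\xi_j$. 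Second, the prescribed weight $\alpha^{-2}\sqrt{|\det g|}=\alpha^{n-1}\sqrt{|h|}$ is exactly the one for which $K$ is formally symmetric on $\mathcal{C}^{\infty}_0(\Sigma)$, and a single integration by parts yields
\begin{equation*}
\int_\Sigma \bar\phi\,K\phi\,\alpha^{n-1}\sqrt{|h|}\,d^{n}x=\int_\Sigma h^{ij}\overline{\partial_i\phi}\,\partial_j\phi\,\alpha^{n-1}\sqrt{|h|}\,d^{n}x\ge 0,
\end{equation*}
giving positivity.

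Third, for essential self-adjointness on $\mathcal{C}^{\infty}_0(\Sigma)$ I would invoke Chernoff's theorem. The two inputs are completeness of $(\Sigma,h)$, just established in (a), and finite propagation speed of the wave equation $\partial_t^{2}u+Ku=0$. The latter is immediate, since multiplying by $\alpha^{-2}$ turns this equation into the Klein--Gordon equation $\square_g u=0$ on the globally hyperbolic $(\M,g)$, whose Cauchy developments enjoy the usual causal confinement. Chernoff's argument, applied to the first-order Hamiltonian system for $(u,\partial_t u)$ associated with the wave equation (equivalently, to the square root $\sqrt{K}$ extracted from the wave group), then yields essential self-adjointness of $K$, and in fact of every power of $K$, on $\mathcal{C}^{\infty}_0(\Sigma)$.

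The step I expect to require the most care is the transition from Chernoff's original statement, phrased for first-order symmetric operators, to the second-order $K$. I plan to handle this either by the standard reduction to the first-order system on $(u,\partial_t u)$, or by appealing to the Gaffney--Roelcke--Strichartz refinement for Laplace-type operators on complete Riemannian manifolds, which applies directly to $K$ since it differs from $-\Delta_h$ only by the smooth first-order perturbation $-\tfrac{n-1}{\alpha}h(\nabla\alpha,\nabla\,\cdot\,)$.
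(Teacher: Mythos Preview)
Your proposal is correct and follows essentially the same route as the paper: conformal reduction to the ultrastatic case for part~(a), a coordinate computation together with one integration by parts for ellipticity, symmetry and positivity in~(b), and Chernoff's method via the first-order wave system for essential self-adjointness. The paper differs only in that it spells out the Chernoff-type deficiency argument explicitly---equipping $\mathcal{C}^{\infty}_0(\Sigma)\oplus\mathcal{C}^{\infty}_0(\Sigma)$ with the energy inner product, noting that well-posedness of the Cauchy problem on the globally hyperbolic $(\M,g)$ makes the evolution $V(t)$ unitary there, and showing that any defect vector would yield a bounded solution of $f''=\mp i f$, hence must vanish---rather than citing Chernoff's theorem as a black box.
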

\begin{proof}
We follow the papers of Chernoff \cite{Chernoff73} and Kay \cite{Kay78}. 

\eqref{a} The spacetime $\M$ is conformally equivalent to an ultrastatic spacetime with metric $dt^2-h$. The latter is globally hyperbolic if and only if $(\Sigma,h)$ is complete \cite{Kay78}. Hence the same holds true for $\M$.

\eqref{b} In local coordinates, $K$ assumes the form
\[K=-\alpha^2\gamma^{-1}\partial_j\gamma \alpha^{-2}h^{jk}\partial_k\]
with $\gamma=\sqrt{|\mathrm{det}g|}$. The principal symbol of $K$ is $\sigma_K=h^{jk}\xi_j\xi_k$, hence $K$ is elliptic. Moreover, on $L^2(\Sigma,\gamma \alpha^{-2})=:\mathcal{H}$ we have, for $\phi,\psi\in\mathcal{C}^{\infty}_0(\Sigma)$
\begin{align*}
\langle \phi,K\psi\rangle=-\int dx\,\overline{\phi(x)}\partial_j\gamma \alpha^{-2}h^{jk}\partial_k\psi(x)=\int dx\, \gamma \alpha^{-2}h^{kj}\overline{\partial_j\phi(x)}\partial_k\psi(x)=\langle K\phi,\psi\rangle \ ,
\end{align*}
hence $K$ is symmetric and positive, thus one can form the Friedrichs extension and obtains a selfadjoint positive operator.

It remains to prove that $K$ is essentially selfadjoint on $\mathcal{C}^{\infty}_0(\Sigma)$. For this purpose we use a variation of the method of Chernoff and exploit the fact that the Cauchy problem for normally hyperbolic differential equations is well posed on globally hyperbolic spacetimes. 

Let $V(t)$ denote the operator on $\mathcal{D}:=\mathcal{C}^{\infty}_0(\Sigma)\oplus \mathcal{C}^{\infty}_0(\Sigma) $ defined by
\[V(t)\left(\begin{array}{c}\phi_1\\ \phi_2\end{array}\right)=\left(\begin{array}{c}\phi(t)\\ \dot{\phi}(t)\end{array}\right)\] 
where $t\mapsto \phi(t)$ is the solution of the Cauchy problem with initial conditions $\phi(0)=\phi_1$ and $\dot{\phi}(0)=\phi_2$.
The 1-parameter group $t\mapsto V(t)$ satisfies the differential equation
\[\frac{d}{dt}V(t)=iAV(t)\]
with
\[iA=\left(\begin{array}{cc}0 & 1\\-K & 0\end{array}\right)\ . \]
We now equip $\mathcal{D}$ with a positive semidefinite scalar product such that $V(t)$ becomes unitary. We set
\[\langle \phi,\psi\rangle=\langle \phi_1,K\psi_1\rangle+\langle\phi_2,\psi_2\rangle\]
where on the right hand side we use the scalar product of $\mathcal{H}=L^2(\Sigma,\gamma \alpha^{-2})$. (The first component of the scalar product vanishes for $m^2=0$ on functions which are constant on every connected component of $\Sigma$, hence if $\Sigma$ has compact connected components the scalar product is not definite). 
Then
\[\frac{d}{dt}\langle V(t)\phi,V(t)\psi\rangle=0\]
which implies that $V(t)$ is unitary.   

We proceed similarly to the proof of Chernoff's Theorem \cite{Chernoff73}. Let $\psi\in \mathcal{H}=L^2(\Sigma,\gamma\alpha^{-2})$ such that for all $\phi\in\mathcal{C}^{\infty}_0(\Sigma)$ 
\[\langle K\phi,\psi\rangle=i\langle \phi,\psi\rangle \ .\]
We consider the function
\[f:t\mapsto \left\langle V(t)\left(\begin{array}{c}0\\ \phi\end{array}\right),\left(\begin{array}{c}0\\ \psi\end{array}\right)\right\rangle\]
Due to the unitarity of $V(t)$, this function is bounded. By the assumption on $\psi$, it satisfies the differential equation
\[\frac{d^2}{dt^2}f(t)=-\left\langle K\left(V(t)\left(\begin{array}{c}0\\ \phi\end{array}\right)\right)_2,\psi\right\rangle=-i\left\langle \left(V(t)\left(\begin{array}{c}0\\ \phi\end{array}\right)\right)_2,\psi\right\rangle=-if(t)\ .\]
But the only bounded solution of this equation vanishes, hence $\psi$ is orthogonal to $\mathcal{C}^{\infty}_0(\Sigma)$ on the Hilbert space $\mathcal{H}$, hence $\psi=0$. The same argument holds if we replace $i$ by $-i$ in the defining condition on $\psi$.
This proves that $K$ is essentially selfadjoint on $\mathcal{H}$.

\end{proof}

\section*{References}

\end{document}